\newtheorem{prop}{Proposition}
\newtheorem{lemma}{Lemma}
\DeclareMathOperator{\diag}{diag}
\DeclareMathOperator{\bdiag}{bdiag}
\pgfplotsset{compat=1.16}
\def \A {\mathbf A}
\def \B {\mathbf B}
\def \C {\mathbf C}
\def \G {\mathbf G}
\def \Z {\mathbf Z}
\def \K {\mathbf K}
\def \D {\mathbf D}
\def \E {\mathbf E}
\def \X {\mathbf X}
\def \M {\mathbf M}
\def \U {\mathbf U}
\def \V {\mathbf V}
\def \S {\mathbf S}
\def \I {\mathbf I}
\def \P {\mathbf P}
\def \U {\mathbf U}
\def \w {\mathbf w}
\def \s {\mathbf s}
\def \x {\mathbf x}
\def \y {\mathbf y}
\def \z {\mathbf z}
\def \e {\mathbf e}
\def \u {\mathbf u}
\def \w {\mathbf w}
\def \h {\mathbf h}
\def \btheta {\boldsymbol \theta}
\def \blambda {\boldsymbol \lambda}
\def \bGamma {\boldsymbol \Gamma}
\def \Ka {K_{\mathrm{a}}}
\def \vec {\mathrm{vec}}
\def \LLR {\mathrm{LLR}}
\def \GLRT {\mathrm{GLRT}}
\def \SNR  {\mathrm{SNR}}
\def \MSE {\mathrm{MSE}}
\DeclareMathAlphabet{\pazocal}{OMS}{zplm}{m}{n}
\def\input@path{{figs/}{./}}
\title{Tensor Decomposition Bounds for TBM-Based Massive Access}
\author{Alexis Decurninge, Ingmar Land and Maxime Guillaud%
\thanks{The authors are with the Advanced Wireless Technology Lab, Paris Research Center, Huawei Technologies France~(e-mail: \texttt{\{alexis.decurninge,ingmar.land,maxime.guillaud\}}\newline \texttt{@huawei.com}).}%
}
\begin{document}
\maketitle

\begin{abstract}
Tensor-based modulation (TBM) has been proposed in the context of unsourced random access for massive uplink communication.
In this modulation, transmitters encode data as rank-1 tensors, with factors from a discrete vector constellation.
This construction allows to split the multi-user receiver into a user separation step based on a low-rank tensor decomposition, and independent single-user demappers.
In this paper, we analyze the limits of the tensor decomposition using Cram\'er-Rao bounds, providing bounds on the accuracy of the estimated factors.
These bounds are shown by simulation to be tight at high SNR.
We introduce an approximate perturbation model for the output of the tensor decomposition, which facilitates the computation of the log-likelihood ratios (LLR) of the transmitted bits, and provides an approximate achievable bound for the finite-length error probability.
Combining TBM with classical forward error correction coding schemes such as polar codes, we use the approximate LLR to derive soft-decision decoders showing a gain over hard-decision decoders at low SNR.
\end{abstract}

\section{Introduction}
\subsection{Context}

The increasing traffic demand of users with sporadic activity, in particular for machine-to-machine communications, translates into the need for low-overhead, spectrally efficient waveforms for grant-free random access.
\emph{Unsourced} random access schemes \cite{polyanskyi17} constitute a step in that direction, by assuming that all transmitters use the same codebook, i.e. the identities of the active transmitters are not required by the multi-user decoder (although they can be embedded in the payloads).
In this case, the receiver decodes a list of messages up to a permutation, regardless of the active users' identities.

Traditional grant-free schemes use (possibly non-orthogonal) pilot sequences in order to both identify active users and estimate their channels \cite{chen18}.
The receiver subsequently performs coherent detection in order to recover the transmitted symbols.
This approach can also be used in the unsourced approach by using a part of the payload to determine the pilot sequence \cite{fengler20}.
On the other hand, many unsourced access schemes avoiding the pilot/data division have also been proposed in particular for the SISO AWGN case \cite{fengler19a} or SISO with Rayleigh fading \cite{kowshik19}.
As for the MIMO setting, a scheme based on compressed sensing has been proposed in \cite{fengler19} and a tensor-based approach \cite{decurninge20}. 
This article focuses on the performance analysis of the latter.

\subsection{System Model and Modulation}

Let us consider an uplink transmission between single-antenna transmitters and an $N$-antennas receiver within a block of $T$ channel uses. 
We assume that $\Ka$ users are active for the duration of the considered block (the total number of users in the system plays no role in the analysis of unsourced schemes) and simultaneously transmit a payload of $B$ information bits each, encoded using a forward error correction code.
Second, we assume a block-fading model whereby the channel state remains constant over the considered block of length $T$, and is a priori unknown to both the transmitters and the receiver.
Let $\h_k\in \mathbb{C}^{N}$ denote the channel from user $k$ to the $N$ receive antennas.
Finally, we assume block synchronization between the transmitters and the receiver.
We consider the tensor-based modulation (TBM) from \cite{decurninge20}, i.e. the codebook $\mathcal{C}$ is composed of Kronecker products of elements from vector sub-constellations $\mathcal{C}_i\subset \mathbb{C}^{T_i}$, namely
\begin{equation}
\mathcal{C} = \Big\{\x_1\otimes\dots\otimes\x_d: \ \x_1\in\mathcal{C}_1, \ \dots \ , \ \x_d\in\mathcal{C}_d\Big\},
\end{equation}
where $T_i>1$ and $\prod_{i} T_i = T$.
We assume furthermore that the constellation elements have equal norm $\|\s\|^2=T$ for any $\s\in\mathcal{C}\subset \mathbb{C}^T$.
Information is encoded in $\x_1\ldots \x_d$, using independent vector modulations in each mode of the tensor.
Let $\x_{1,k}\otimes \dots\otimes \x_{d,k}$ denote the vector symbol sent by the $k$-th user; considering an additive Gaussian i.i.d. noise vector $\w$, the signal received by the $N$ antennas is modeled by
\begin{equation}
\label{eq:receive}
\y = \sum_{k=1}^{\Ka} \x_{1,k}\otimes \dots\otimes \x_{d,k} \otimes \h_k + \w \ \in \mathbb{C}^{TN}.
\end{equation}

\subsection{Receiver Architecture}
In the sequel, we assume that $\Ka$ is known to the receiver, and focus on the implementation of a soft demapper.
Since the enumeration of the joint (across the active users) constellation $\mathcal{C}^{\Ka}$ is typically not computationally feasible, we consider the following functional split of the demapper:
\begin{itemize}
\item A low-rank tensor approximation step, whereby the received signal is approximated by a rank-$\Ka$ tensor, thus jointly estimating the channel and the symbols. At this stage, the discrete nature of $\mathcal{C}$ is ignored and the $\x_{i,k}$ are treated as continuous parameters. This corresponds to the following ML estimator:
\begin{equation}
\label{eq:cpd}
\hspace{-0.2cm}
\small
\{\hat\z_{i,k},\hat\h_k\}_{\substack{i\in[d], \\ k\in[\Ka]}} = \arg\min_{\substack{\h_k\in\mathbb{C}^{N}\\ \z_{i,k}\in\mathbb{C}^{T_i}}} \left\|\y - \sum_{k=1}^{\Ka} \z_{1,k}\otimes \dots\otimes \z_{d,k} \otimes \h_k\right\|^2.
\end{equation}
Note that this step effectively performs user separation, thanks to the property that each user is associated to a rank-1 tensor \cite{decurninge20}.
In order to guarantee the unicity of a solution to \eqref{eq:cpd}, we may impose constraints on the parameters $\z_{i,k}$ -- as detailed in Section~\ref{section_CRB}.
\item A single-user soft demapper, where the log-likelihood ratios (LLRs) are computed independently for each user, based on the outputs of the tensor decomposition computed at the previous step, and used as inputs to a soft-decision decoder.
Soft demapping requires to have access to the distribution of the symbols conditioned on these variables, which we denote by $p(\x_{i,k}|\{\hat\z_{j,k}\},\hat\h_k)$.
\end{itemize}
Solving \eqref{eq:cpd} in the noise-free case is a well-studied problem (known as the canonical polyadic decomposition, or CPD) in tensor algebra.
Conversely, in the presence of noise, there is no established perturbation model for the solution of the CPD.
Consequently, $p(\x_{i,k}|\{\hat\z_{j,k}\},\hat\h_k)$ is not known in closed form.
The general objective of this paper is to propose a simple approximation of $p(\x_{i,k}|\hat\z_{i,k},\hat\h_k)$ that is suitable for use in a soft demapper, and therefore we neglect the dependency on $\hat\z_{j,k}, j\neq i$.

We first analyze the performance limits of the tensor decomposition seen as a statistical estimation in Section~\ref{sec:cpd} and then propose asymptotic approximations of the LLR and the achievable rate in Section~\ref{sec:llr}.
Finally, we illustrate our analysis through numerical simulations in Section~\ref{sec:simul}.

\section{Theoretical Performance of the Approximate CPD}
\label{sec:cpd}
\subsection{Cram\'er-Rao (CR) Bounds}
\label{section_CRB}
For the sake of notational simplicity, let $p=d+1$ and $\x_{p,k}=\h_k$ for any $1\leq k\leq \Ka$.
Consider the estimation of the factors $(\x_{i,k})_{1\leq i\leq p, 1\leq k\leq \Ka}$ in \eqref{eq:receive}, and let $\hat\z_{i,k}$ denote an estimate of $\x_{i,k}$. We further denote $\hat\Z_i = (\hat\z_{i,1},\dots,\hat\z_{i,\Ka})$ and $\X_i = (\x_{i,1},\dots,\x_{i,\Ka})$.
Therefore,  $\hat\btheta=(\vec(\hat\Z_1)^T,\dots,\vec(\hat\Z_p)^T)^T$ is an estimate of $\btheta=(\vec(\X_1)^T,\dots,\vec(\X_p)^T)^T$.
We now derive a lower bound on the variance of $\hat\btheta$ using CR analysis.

The difficulty of deriving such bounds for the tensor parameters lies in the fact that there are multiple ways to represent the same rank-1 tensor as the product $\x_{1,k}\otimes \dots \otimes \x_{p,k}$ (see e.g. \cite{decurninge20}).
This directly translates into scalar ambiguities on the parameters $\{\x_{i,k}\}$: the elements of the Kronecker product can be multiplied by arbitrary complex scalar coefficients, provided that their product is equal to 1.
In the noise-free case, these ambiguities are not detrimental to detection when the $\mathcal{C}_i$ are defined using vector constellations adapted to non-coherent channels, as in the case of Grassmannian codebooks \cite{decurninge20}.
However, they must be accounted for, and they complicate the derivation of the CR bounds when detection is considered in the presence of noise; attempting to derive CR bounds while ignoring the representation ambiguity will yield ill-conditioned results.
They can be resolved by introducing constraints on the sub-constellations $\mathcal{C}_i$, such as $\|\x_{i,k}\|^2=T_i$ and $\text{Im}(\x_{i,k}^H\x_{i,k}^0)=0$ where $\x_{i,k}^0$ is a reference vector for any $k$ and any $1\leq i\leq p-1$.
Recall that the $p$-th mode corresponds to the fading process ($\x_{p,k}=\h_k$), and therefore we make no assumption regarding the norm of $\x_{p,k}$.
Note that the norm constraints induce that the total energy used by each transmitted codebook vector is equal to $T$, i.e. $\|\x_{1,k}\otimes\dots\otimes\x_{d,k}\|^2=T$.

CR bounds can be established by assuming identical constraints on the set of estimated parameters $\hat{\btheta}$.
In particular, we will consider the constraints
\begin{equation}
\text{for any } \begin{array}{l}1\leq k\leq \Ka\\ 1\leq i\leq p-1\end{array}, \quad 
  \left\{\begin{array}{ll}
\|\hat{\z}_{i,k}\|^2 = \|\x_{i,k}\|^2 = T_i\\
\text{Im}(\x_{i,k}^H\hat{\z}_{i,k}) = 0
\end{array}\right.
\label{eq:constraints}
\end{equation}
wherein the first constraint deals with the norm ambiguity and the second with the phase ambiguity. 
Note that we chose a phase constraint that requires the knowledge on the true parameter $\x_{i,k}$ since it leads to more tractable bounds.

Note that there remains a sign ambiguity on $\{\x_{i,k}\}_{1\leq i\leq p-1}$ and a permutation ambiguity over the indices $1\leq k\leq \Ka$ that can be resolved with additional inequality constraints which have negligible impact on CR bounds \cite{liu_sidiropoulos_01}.
CR bounds for the tensor decomposition have been derived for unbiased estimators e.g. in \cite{liu_sidiropoulos_01} or \cite{tichavsky13} with similar constraints. 
However, the norm constraint in \eqref{eq:constraints} makes the unbiased assumption unsuitable, since for any estimator $\hat\z_{i,k}$ satisfying the norm constraint and whose distribution admits a density on the sphere of radius $\sqrt{T_i}$, it holds that $\big|\mathbb{E}[\hat\z_{i,k}^H\x_{i,k}]\big|<T_i$, i.e. $\mathbb{E}[\hat\z_{i,k}]$ is strictly inside the sphere while $\hat\z_{i,k}$ is on the sphere, therefore the estimate $\hat\z_{i,k}$ is necessarily biased. 
We will hence consider the class of \emph{biased} estimators of $\x_{i,k}$ satisfying the constraints \eqref{eq:constraints}.
Furthermore, for the sake of simplicity of exposition, we will restrict ourselves to a bias parameterized by scalars $\alpha_{i,k}\in[0,1]$ for $1\leq i\leq d$ through
\begin{equation}
\mathbb{E}[\hat\z_{i,k}] =  \alpha_{i,k}\x_{i,k} \text{\quad for any $1\leq i\leq d$, $1\leq k\leq \Ka$}.
\label{eq:bias}
\end{equation}
Note that \eqref{eq:bias} only concerns the information-bearing modes ($1\leq i\leq d$); conversely, we seek an unbiased estimator for $\x_{p,k}=\h_k$ since it is not assumed to lie on a sphere.

We will bound the accuracy of the estimator $\hat\z_{i,k}$ for $1\leq i\leq d$ in terms of a normalized variance $\xi_{i,k}$ defined as
\begin{equation}
\xi_{i,k} = \frac{\mathbb{E}\big[\|\hat{\z}_{i,k} - \alpha_{i,k}\x_{i,k}\|^2\big]}{T_i\alpha_{i,k}^2}.
\label{eq:xidef}
\end{equation}

In the following proposition, we reformulate the results of \cite{tichavsky13} in the context of biased estimators in order to derive a lower bound on $\xi_{i,k}$.

\begin{prop}
\label{prop:crb}
Let $\s_{-i,k}$ denote a partial Kronecker product of the sub-constellation vectors at user $k$ where mode $i$ is omitted, i.e. for $1\leq i\leq p$,
\begin{equation}
\s_{-i,k} = \x_{1,k}\otimes \dots\otimes \x_{i-1,k}\otimes \x_{i+1,k}\otimes\dots\x_{p,k}.
\label{eq:Sdef}
\end{equation}
Define also
\begin{equation}
\S_{-i} = (\s_{-i,1},\dots,\s_{-i,\Ka})\in\mathbb{C}^{\prod_{m\neq i}T_m\times \Ka}  \text{\quad and\quad}  \bGamma_{i} = \S_{-i}^H\S_{-i}.
\end{equation}
Furthermore, let $\S_{-i,-k}$ denote the matrix $\S_{-i}$ with the $k$-th column removed, and $P^{\perp}_{\S_{-i,-k}}$ the projection matrix on the subspace orthogonal to the column span of $\S_{-i,-k}$.
Similarly, we denote by $P_{\x_{i,k}}^{\perp}\in\mathbb{C}^{T_i\times (T_i-1)}$ an orthonormal basis of vectors orthogonal to $\x_{i,k}$.
We assume that $\bGamma_{i}$ is an invertible $\Ka\times \Ka$ matrix, and let $\e_k$ denote the $k$-th vector of the canonical basis and $\U_i$ the orthonormal basis spanned by the $\Ka-1$ vectors $(\bGamma_{i}^{-1/2}\e_{k'}\otimes \x_{i,k'})_{k'\neq k}$ and $\V_i = \frac{\bGamma_{i}^{-1/2}\e_k}{\|\bGamma_{i}^{-1/2}\e_k\|}\otimes P_{\x_{i,k}}^{\perp}$.
Then, for any $1\leq i\leq p-1$, any estimator satisfying \eqref{eq:constraints} and \eqref{eq:bias} satisfies
\begin{equation}
\xi_{i,k} \geq \frac{\left(T_i-1-\|\U_i^H\V_i\|^2\right)\sigma^2}{T_i\|P^{\perp}_{\S_{-i,-k}}\s_{-i,k}\|^2}.
\label{eq:cr_variance}
\end{equation}
In the particular case $\Ka=1$, this lower bound simplifies into 
\begin{equation}
\xi_{i,k} \geq \frac{\sigma^2(T_i-1)}{T_i\prod_{j\neq i}\|\x_{j,k}\|^2}.
\end{equation}
\end{prop}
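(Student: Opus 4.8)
The plan is to treat \eqref{eq:receive} as a circularly-symmetric Gaussian model $\y=\mu(\btheta)+\w$ with known noise variance $\sigma^2$ and mean $\mu(\btheta)=\sum_{k}\x_{1,k}\otimes\dots\otimes\x_{p,k}$, and to specialize the constrained Cram\'er-Rao machinery of \cite{tichavsky13} to biased estimators. First I would compute the Jacobian $\D=\partial\mu/\partial\btheta$. Differentiating $\mu$ with respect to $\x_{i,k}$ produces $\x_{1,k}\otimes\dots\otimes\partial\x_{i,k}\otimes\dots\otimes\x_{p,k}$, so the corresponding block of $\D$ is, up to a mode permutation, a Kronecker factor built from the partial product $\s_{-i,k}$ and the identity $\I_{T_i}$. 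Assembling these blocks and using the standard expression $\F=\tfrac{2}{\sigma^2}\text{Re}(\D^H\D)$ for the Fisher information of a Gaussian mean, the mode-$i$ diagonal block becomes a function of $\bGamma_i=\S_{-i}^H\S_{-i}$, the Gram matrix of the other-mode products, while the off-diagonal blocks encode the inter-mode coupling. This is the step where the matrices $\bGamma_i$ and $\s_{-i,k}$ enter naturally.

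Next I would encode the constraints \eqref{eq:constraints}. At the true value, the norm constraint $\|\hat\z_{i,k}\|^2=T_i$ pins the real part of the perturbation along $\x_{i,k}$, and the phase constraint $\text{Im}(\x_{i,k}^H\hat\z_{i,k})=0$ pins the imaginary part along $\x_{i,k}$; together they confine admissible perturbations of $\x_{i,k}$ to the orthogonal complement spanned by the $T_i-1$ columns of $P^\perp_{\x_{i,k}}$. I would therefore build an orthonormal basis $\mathbf{Q}$ of the constraint tangent space out of these mode-wise complements and apply the constrained bound restricted to $\mathbf{Q}$, namely a lower bound of the form $\mathbf{Q}(\mathbf{Q}^H\F\mathbf{Q})^{-1}\mathbf{Q}^H$ on the tangential error covariance. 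Because the estimator is biased through \eqref{eq:bias}, I would multiply by the bias gradient, whose mode-$i$ user-$k$ entry is $\alpha_{i,k}$; this is exactly what the normalization by $T_i\alpha_{i,k}^2$ in \eqref{eq:xidef} is designed to cancel, so that the factor $\alpha_{i,k}$ drops out of the final bound.

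The core of the argument is then to extract the $(i,k)$ block of $\mathbf{Q}(\mathbf{Q}^H\F\mathbf{Q})^{-1}\mathbf{Q}^H$ and reduce it to the closed form \eqref{eq:cr_variance}. Diagonalizing the inter-user interaction through $\bGamma_i^{-1/2}$, the tangent space splits into an inter-user part spanned by $\U_i$, built from $\bGamma_i^{-1/2}\e_{k'}\otimes\x_{i,k'}$ for $k'\neq k$, and the intra-user orthogonal part $\V_i=\frac{\bGamma_i^{-1/2}\e_k}{\|\bGamma_i^{-1/2}\e_k\|}\otimes P^\perp_{\x_{i,k}}$. The effective number of free directions available to $\hat\z_{i,k}$ is $T_i-1$, reduced by the overlap between these two subspaces, which is precisely $\|\U_i^H\V_i\|^2$, yielding the numerator. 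The denominator follows from the standard Schur-complement identity $[\bGamma_i^{-1}]_{kk}=1/\|P^\perp_{\S_{-i,-k}}\s_{-i,k}\|^2$ together with $\|\bGamma_i^{-1/2}\e_k\|^2=[\bGamma_i^{-1}]_{kk}$, which converts the normalization of $\V_i$ into the effective signal energy $\|P^\perp_{\S_{-i,-k}}\s_{-i,k}\|^2$ seen by user $k$. I expect the main obstacle to be this last reduction: correctly separating the inter-user and intra-user contributions in the pseudo-inverse of the restricted Fisher information, and proving that the inter-user coupling subtracts exactly $\|\U_i^H\V_i\|^2$ degrees of freedom. Finally, for $\Ka=1$ there is no other user, so $\U_i$ is empty, $\|\U_i^H\V_i\|^2=0$ and $\|P^\perp_{\S_{-i,-k}}\s_{-i,k}\|^2=\|\s_{-i,k}\|^2=\prod_{j\neq i}\|\x_{j,k}\|^2$, which collapses \eqref{eq:cr_variance} to the stated special case.
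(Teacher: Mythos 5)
Your overall architecture matches the paper's proof in Appendix~\ref{app:crb}: a Gaussian-mean Fisher information whose mode-$i$ diagonal block is $\bGamma_{i}\otimes\I_{T_i}$, the constrained-CRB tangent basis built from $P^{\perp}_{\x_{i,k}}$ (the two constraint gradients at the truth do pin the complex direction along $\x_{i,k}$, exactly as you say), the biased bound \`a la \cite{benhaim09} with the bias factor $\alpha_{i,k}$ cancelling against the normalization in \eqref{eq:xidef}, the within-mode split into $\U_i$ and $\V_i$, the identity $(\bGamma_{i}^{-1})_{kk}=\|P^{\perp}_{\S_{-i,-k}}\s_{-i,k}\|^{-2}$, and the $\Ka=1$ specialization. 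The part you flag as the ``main obstacle'' --- separating inter-user and intra-user contributions --- is in fact routine: completing the tangent basis $\nu_i$ with the constrained directions $\mu_i$ gives $(\nu_{i}^H\G_i\nu_{i})^{-1} = \nu_{i}^H\G_i^{-1}\nu_{i} - \nu_{i}^H\G_i^{-1} \mu_{i}(\mu_{i}^H\G_i^{-1}\mu_{i})^{-1}\mu_{i}^H\G_i^{-1}\nu_{i}$, and tracing the $(k,k)$ block yields $\big(T_i-1-\|\U_i^H\V_i\|^2\big)(\bGamma_{i}^{-1})_{kk}$ directly.

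The genuine gap is elsewhere: your plan never handles the inter-mode coupling, which you mention when assembling the Fisher matrix and then silently drop. The $(i,k)$ block of the exact constrained CRB does \emph{not} ``reduce'' to \eqref{eq:cr_variance}; with $\mathcal{I}_{\btheta\btheta}=\frac{1}{\sigma^2}(\G+\Z\K\Z^H)$ in the paper's notation, the exact bound depends on the cross-mode matrix $\K$, and \eqref{eq:cr_variance} is a deliberately weaker bound obtained by discarding it. Nor can you justify the decoupling by replacing the Fisher matrix with its block-diagonal part before inverting: a Loewner comparison between $\G+\Z\K\Z^H$ and $\G$ is unavailable in either direction (a nonzero Hermitian $\K$ with vanishing block diagonal is never positive semidefinite), and if it did hold with $\Z\K\Z^H\succeq 0$ it would produce an \emph{upper} bound on the CRB, i.e.\ the wrong inequality for a variance lower bound. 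The paper's missing step is a one-sided Woodbury relaxation: writing $\M=\G_{\nu}^{-1/2}\Z_{\nu}\K\Z_{\nu}^H\G_{\nu}^{-1/2}$, one has $(\nu_{\btheta}^H\mathcal{I}_{\btheta\btheta}\nu_{\btheta})^{-1}=\sigma^2\G_{\nu}^{-1/2}\big(\I-\M+\M(\I+\M)^{-1}\M\big)\G_{\nu}^{-1/2}\succeq \sigma^2\big(\G_{\nu}^{-1}-\G_{\nu}^{-1}\Z_{\nu}\K\Z_{\nu}^H\G_{\nu}^{-1}\big)$ since $\M(\I+\M)^{-1}\M\succeq 0$, and then --- because $\K$ has zero block diagonal while $\G_{\nu}$ and $\Z_{\nu}$ are block diagonal --- the first-order correction contributes nothing to the $(i,k)$ diagonal block. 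Only after this step does the problem decouple mode by mode so that your $\U_i$/$\V_i$ reduction applies; without it, your derivation asserts a false equality or an unjustified relaxation at the very point where the proposition's inequality is created.
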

\begin{proof}
The result is obtained by lower-bounding the constrained CRB. See Appendix~\ref{app:crb} for details.
\end{proof}

Note that the result for $\Ka=1$ corresponds to the CR bound (see \cite[eq. 40]{tichavsky13}).
The assumption that $\bGamma_{i}$ is invertible in Prop.~\ref{prop:crb} is reasonable for large blocksizes and payloads, assuming that the vector symbols are independent across users.
In fact, if the signals generated by different users have low scalar product, $\bGamma_{i}^{-1}$ is close to a diagonal matrix and the term $\|\U_i^H\V_i\|^2$ in \eqref{eq:cr_variance} becomes negligible.
This leads to the following, more explicit bound for $\x_{i,k}$:
\begin{prop}
\label{prop:crb2}
Define $0<\eta_i^-<\eta_i^+$ such that\footnote{We use $\A\preceq\B$ to denote that $\B-\A$ is a positive semi-definite matrix.}
\begin{equation}
\eta_i^-\diag((\bGamma_{i}^{-1})_{kk})_{k=1\dots\Ka}\preceq \bGamma_{i}^{-1}\preceq \eta_i^+\diag((\bGamma_{i}^{-1})_{kk})_{k=1\dots\Ka},
\end{equation}
and let us denote
\begin{equation}
\xi_{i,k}^* = \frac{\sigma^2(T_i-1)}{T_i\|P^{\perp}_{\S_{-i,-k}}\s_{-i,k}\|^2} \left(1 - \frac{(\eta_i^+)^2-1}{(T_i-1)\eta_i^-}\right).
\label{eq:xistar}
\end{equation}
Then, for any estimator satisfying \eqref{eq:constraints} and \eqref{eq:bias}, we have 
\begin{equation}
\xi_{i,k} \geq \xi_{i,k}^* \text{\quad for $1\leq i\leq p-1$}.
\end{equation}
\end{prop}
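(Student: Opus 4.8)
The plan is to reduce the proposition to a single scalar inequality and then estimate that quantity with two classical matrix tools. Observe first that substituting $\|\U_i^H\V_i\|^2=\frac{(\eta_i^+)^2-1}{\eta_i^-}$ into the bound of Proposition~\ref{prop:crb} reproduces $\xi_{i,k}^*$ exactly, since
\[
\frac{\big(T_i-1-\tfrac{(\eta_i^+)^2-1}{\eta_i^-}\big)\sigma^2}{T_i\|P^{\perp}_{\S_{-i,-k}}\s_{-i,k}\|^2}
=\frac{\sigma^2(T_i-1)}{T_i\|P^{\perp}_{\S_{-i,-k}}\s_{-i,k}\|^2}\Big(1-\tfrac{(\eta_i^+)^2-1}{(T_i-1)\eta_i^-}\Big)=\xi_{i,k}^*.
\]
Since the lower bound in Proposition~\ref{prop:crb} is decreasing in $\|\U_i^H\V_i\|^2$, any upper bound on the latter yields a valid (possibly looser) lower bound on $\xi_{i,k}$; it therefore suffices to prove $\|\U_i^H\V_i\|^2\leq\frac{(\eta_i^+)^2-1}{\eta_i^-}$, where $\|\cdot\|$ is the Frobenius norm.

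For the setup, let $\A$ be the matrix with columns $\bGamma_i^{-1/2}\e_{k'}\otimes\x_{i,k'}$ for $k'\neq k$, so that $\U_i\U_i^H=\A(\A^H\A)^{-1}\A^H$ is the orthogonal projector onto $\mathrm{span}(\U_i)$, and set $\M=\A^H\V_i$. Then $\|\U_i^H\V_i\|^2=\mathrm{tr}\big(\M^H(\A^H\A)^{-1}\M\big)\leq\|\M\|^2/\lambda_{\min}(\A^H\A)$. A key preliminary observation is scale invariance: since the projector depends only on the column space of $\A$, each first factor may be normalized to unit norm, $\u_{k'}=\bGamma_i^{-1/2}\e_{k'}/\sqrt{(\bGamma_i^{-1})_{k'k'}}$, without changing $\|\U_i^H\V_i\|^2$. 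After normalization the Gram matrix of the first factors is the correlation matrix $\widetilde{\bGamma}_i=\D_i^{-1/2}\bGamma_i^{-1}\D_i^{-1/2}$ with $\D_i=\diag((\bGamma_i^{-1})_{kk})$, whose diagonal is unit and whose eigenvalues lie in $[\eta_i^-,\eta_i^+]$ by the hypothesis $\eta_i^-\D_i\preceq\bGamma_i^{-1}\preceq\eta_i^+\D_i$. Crucially, this makes the varying diagonal weights $\D_i$ cancel out of the ratio.

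Now I bound the two factors. Using the Kronecker structure $\u_{k'}\otimes\x_{i,k'}$ of the columns of $\A$ and the first factor $\u_k$ of $\V_i=\u_k\otimes P^{\perp}_{\x_{i,k}}$, one gets $\|\M\|^2=\sum_{k'\neq k}|\u_{k'}^H\u_k|^2\,\|(P^{\perp}_{\x_{i,k}})^H\x_{i,k'}\|^2\leq T_i\sum_{k'\neq k}|(\widetilde{\bGamma}_i)_{k'k}|^2$, where I used $\u_{k'}^H\u_k=(\widetilde{\bGamma}_i)_{k'k}$ and $\|\x_{i,k'}\|^2=T_i$. Because $\widetilde{\bGamma}_i$ is Hermitian with unit diagonal, $\sum_{k'\neq k}|(\widetilde{\bGamma}_i)_{k'k}|^2=(\widetilde{\bGamma}_i^2)_{kk}-1\leq\lambda_{\max}(\widetilde{\bGamma}_i)^2-1\leq(\eta_i^+)^2-1$, so $\|\M\|^2\leq T_i((\eta_i^+)^2-1)$. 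For the denominator, $\A^H\A$ is the Hadamard product of the principal submatrix of $\widetilde{\bGamma}_i$ indexed by $k'\neq k$ with the Gram matrix of $\{\x_{i,k'}\}_{k'\neq k}$, both positive semidefinite. Invoking the Schur-type bound $\lambda_{\min}(\mathbf P\circ\mathbf Q)\geq\lambda_{\min}(\mathbf P)\min_j Q_{jj}$, together with eigenvalue interlacing ($\lambda_{\min}$ of the principal submatrix is at least $\lambda_{\min}(\widetilde{\bGamma}_i)\geq\eta_i^-$) and $\|\x_{i,k'}\|^2=T_i$, yields $\lambda_{\min}(\A^H\A)\geq\eta_i^- T_i$. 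Combining gives $\|\U_i^H\V_i\|^2\leq\frac{T_i((\eta_i^+)^2-1)}{\eta_i^- T_i}=\frac{(\eta_i^+)^2-1}{\eta_i^-}$, which is exactly what is needed.

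The main obstacle is the denominator estimate: controlling $\lambda_{\min}(\A^H\A)$ requires both the Hadamard product eigenvalue inequality and the scale-invariance reduction that removes $\D_i$, since otherwise the nonuniform diagonal entries $(\bGamma_i^{-1})_{kk}$ would not cancel and the clean ratio $\frac{(\eta_i^+)^2-1}{\eta_i^-}$ would be out of reach. The numerator identity $\sum_{k'\neq k}|(\widetilde{\bGamma}_i)_{k'k}|^2=(\widetilde{\bGamma}_i^2)_{kk}-1$ and the spectral bound on $\widetilde{\bGamma}_i$ are then routine.
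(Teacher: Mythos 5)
Your proof is correct, and it reaches the paper's bound by a genuinely different (though parallel) route. The paper never isolates $\|\U_i^H\V_i\|^2$: its Appendix proof re-enters the constrained-CRB machinery, starting from the block decomposition \eqref{eq:decomposition} and bounding the correction term $\nu_{i}^H\G_i^{-1}\mu_{i}(\mu_{i}^H\G_i^{-1}\mu_{i})^{-1}\mu_{i}^H\G_i^{-1}\nu_{i}$ blockwise, via the two PSD estimates $(\mu_i^H\G_i^{-1}\mu_i)^{-1}\preceq\frac{1}{\eta_i^-}\D_i^{-1}$ and $\bGamma_{i}^{-1}\D_i^{-1}\bGamma_{i}^{-1}\preceq(\eta_i^+)^2\D_i$, keeping the unnormalized entries $(\bGamma_i^{-1})_{kk'}$ throughout. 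You instead treat Proposition~\ref{prop:crb} as a black box, observe the monotonicity of \eqref{eq:cr_variance} in $\|\U_i^H\V_i\|^2$, and prove the single scalar estimate $\|\U_i^H\V_i\|^2\leq\frac{(\eta_i^+)^2-1}{\eta_i^-}$ directly, using projector scale-invariance to pass to the correlation matrix $\D_i^{-1/2}\bGamma_i^{-1}\D_i^{-1/2}$, the Schur-product eigenvalue bound $\lambda_{\min}(\mathbf P\circ\mathbf Q)\geq\lambda_{\min}(\mathbf P)\min_j Q_{jj}$ for $\A^H\A$, and Cauchy interlacing for the principal submatrix (the interlacing step is needed precisely because you exclude the index $k$ from the start, whereas the paper works with all $\Ka$ columns and removes the $k$-th subspace only at the end by an orthogonality remark). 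The underlying spectral content is the same — your numerator identity $\sum_{k'\neq k}|(\widetilde{\bGamma}_i)_{k'k}|^2=(\widetilde{\bGamma}_i^2)_{kk}-1\leq(\eta_i^+)^2-1$ is exactly the paper's step \eqref{eq:upper} in normalized coordinates — but your packaging is self-contained given the statement of Proposition~\ref{prop:crb}, arguably cleaner, and makes the role of the two constants $\eta_i^\pm$ transparent; the paper's version buys direct contact with the Fisher-information blocks and avoids the two auxiliary classical lemmas. One small point worth stating explicitly: writing $\U_i\U_i^H=\A(\A^H\A)^{-1}\A^H$ requires $\A$ to have full column rank, which does hold here (the first Kronecker factors $\bGamma_i^{-1/2}\e_{k'}$ are linearly independent since $\bGamma_i$ is assumed invertible), so this is a gap in exposition only, not in substance.
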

\begin{proof}
See Appendix~\ref{app:crb2}.
\end{proof}
Considering \eqref{eq:xidef}, Proposition~\ref{prop:crb2} indicates that $\alpha_{i,k}^2\xi_{i,k}^*T_i$ is a lower bound on the variance of any estimator of $\x_{i,k}$.
Moreover, if $T\gg\Ka$, $\bGamma_{i}$ is diagonal and $\eta_i^- = \eta_i^+ =1$ for $1\leq i\leq p-1$.
Furthermore, $P^{\perp}_{\S_{-i,-k}}$ is independent from $\s_{-i,k}$.
Assuming that the factors $\x_{i,k}$ of each user  are isotropically drawn\footnote{Grassmannian codebooks designed with a typical max-min distance criterion are asymptotically equivalent to a uniform random variable on a sphere.}, we can approximate the distribution of $P^{\perp}_{\S_{-i,-k}}$ by an isotropic distribution on the space of projectors on a subspace of size $\prod_{j\neq i}T_j - \Ka +1$.
As a consequence, we have $\|P^{\perp}_{\S_{-i,-k}}\s_{-i,k}\|^2\approx \frac{\prod_{j\neq i}T_j - \Ka +1}{\prod_{j\neq i} T_j}\prod_{j\neq i}\|\x_{j,k}\|^2$, which yields for $1\leq i\leq d$ using \eqref{eq:constraints}
\begin{eqnarray}
\xi_{i,k}^* &\approx&  \frac{\prod_{j\neq i} T_j}{\prod_{j\neq i}T_j - \Ka +1}\frac{(T_i-1) \sigma^2}{\prod_{j\neq i}\|\x_{j,k}\|^2} \\
\xi_{i,k}^* &\approx& \frac{(T_i-1) \sigma^2}{TN - T_i(\Ka -1)}\frac{N}{\|\h_{k}\|^2}. 
\label{eq:mse_approx}
\end{eqnarray}
Note that \eqref{eq:mse_approx} only depends on the norm of the last mode of the tensor (i.e. the channels) since it is not concerned by \eqref{eq:constraints}.
Interestingly, \eqref{eq:mse_approx} indicates that whenever $\prod_{j\neq i}T_j \gg \Ka$, the number of active users $\Ka$ has a negligible influence on the approximate variance of $\xi_{i,k}^*$, which denotes a low-interference regime.

\subsection{Mean Square Error (MSE) for the constrained estimators}
Note that the performance metric $\xi_{i,k}$ in \eqref{eq:xidef} depends on the bias, which is unknown a priori, and therefore is not directly exploitable.
A more practical metric is the Mean Square Error (MSE): 
\begin{equation}
\MSE_{i,k} = \frac{1}{T_i} \mathbb{E}\big[\|\hat\z_{i,k} - \x_{i,k}\|^2\big].
\end{equation}
The following lemma shows that the constraint \eqref{eq:constraints} introduces a coupling between the estimator bias and the MSE.
\begin{lemma}
\label{prop:mse}
For any estimator satisfying \eqref{eq:constraints} and \eqref{eq:bias}, it holds 
\begin{equation}
\label{eq:alpha}
\alpha_{i,k}^2 = \frac{1}{1+ \xi_{i,k}}.
\end{equation}
In other words, in this model, the bias $\alpha_{i,k}$ is not a free parameter that can be optimized to minimize the MSE. 
With the hypothesis of Proposition~\ref{prop:crb2}, the MSE is lower bounded by 
\begin{equation}
\label{eq:mse_lb}
\frac{1}{T_i}\mathbb{E}\big[\|\hat{\z}_{i,k} - \x_{i,k}\|^2\big] \geq 2\left(1 - \frac{1}{(1+\xi_{i,k}^*)^{1/2}}\right).
\end{equation}
\end{lemma}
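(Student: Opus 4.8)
The plan is to show that both the normalized variance $\xi_{i,k}$ and the MSE collapse to explicit functions of the single scalar $\alpha_{i,k}$ once the norm and phase constraints \eqref{eq:constraints} and the bias assumption \eqref{eq:bias} are imposed; the lower bound \eqref{eq:mse_lb} then follows by substitution and monotonicity. The whole argument is a pair of second-moment expansions that exploit the constraints to eliminate every term except those controlled by $\alpha_{i,k}$.

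First I would establish \eqref{eq:alpha}. Expanding the numerator of \eqref{eq:xidef} gives
\begin{equation*}
\mathbb{E}\big[\|\hat\z_{i,k}-\alpha_{i,k}\x_{i,k}\|^2\big] = \mathbb{E}\big[\|\hat\z_{i,k}\|^2\big] - 2\alpha_{i,k}\,\mathbb{E}\big[\mathrm{Re}(\x_{i,k}^H\hat\z_{i,k})\big] + \alpha_{i,k}^2\|\x_{i,k}\|^2.
\end{equation*}
The key simplification comes from the constraints: the norm constraint gives $\mathbb{E}[\|\hat\z_{i,k}\|^2]=T_i=\|\x_{i,k}\|^2$, while the phase constraint $\mathrm{Im}(\x_{i,k}^H\hat\z_{i,k})=0$ lets me drop the real-part operator and write, using \eqref{eq:bias}, $\mathbb{E}[\x_{i,k}^H\hat\z_{i,k}]=\x_{i,k}^H\mathbb{E}[\hat\z_{i,k}]=\alpha_{i,k}\|\x_{i,k}\|^2=\alpha_{i,k}T_i$. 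Substituting yields $\mathbb{E}[\|\hat\z_{i,k}-\alpha_{i,k}\x_{i,k}\|^2]=T_i(1-\alpha_{i,k}^2)$, so that $\xi_{i,k}=(1-\alpha_{i,k}^2)/\alpha_{i,k}^2=\alpha_{i,k}^{-2}-1$, which rearranges to \eqref{eq:alpha}.

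Next I would repeat the expansion for the MSE. Writing $\|\hat\z_{i,k}-\x_{i,k}\|^2=\|\hat\z_{i,k}\|^2-2\mathrm{Re}(\x_{i,k}^H\hat\z_{i,k})+\|\x_{i,k}\|^2$ and reusing the same three facts (norm constraint, phase constraint, bias) makes the cross term reduce cleanly, giving $\MSE_{i,k}=\tfrac{1}{T_i}\cdot 2T_i(1-\alpha_{i,k})=2(1-\alpha_{i,k})$. Combining with \eqref{eq:alpha}, i.e. $\alpha_{i,k}=(1+\xi_{i,k})^{-1/2}$ taking the positive root since $\alpha_{i,k}\in[0,1]$, gives $\MSE_{i,k}=2\big(1-(1+\xi_{i,k})^{-1/2}\big)$. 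Finally, since $x\mapsto 2(1-(1+x)^{-1/2})$ is increasing for $x>-1$ and Proposition~\ref{prop:crb2} supplies $\xi_{i,k}\geq\xi_{i,k}^*$, I conclude $\MSE_{i,k}\geq 2\big(1-(1+\xi_{i,k}^*)^{-1/2}\big)$, which is \eqref{eq:mse_lb}.

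The computation presents no genuine obstacle; the only point requiring care is the bookkeeping of the cross term $\mathbb{E}[\x_{i,k}^H\hat\z_{i,k}]$, where the phase constraint is essential—without $\mathrm{Im}(\x_{i,k}^H\hat\z_{i,k})=0$ the real part would not coincide with the full inner product, and neither $\xi_{i,k}$ nor $\MSE_{i,k}$ would reduce to a function of $\alpha_{i,k}$ alone. That collapse to a one-parameter family is precisely what produces the bias–MSE coupling asserted in the lemma.
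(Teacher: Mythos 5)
Your proof is correct and takes essentially the same route as the paper's proof in Appendix~\ref{app:bias_variance}: expand the two second moments, use the norm constraint and the bias model \eqref{eq:bias} to evaluate the cross term, deduce $\xi_{i,k}=\alpha_{i,k}^{-2}-1$ and $\MSE_{i,k}=2(1-\alpha_{i,k})$, and conclude via $\xi_{i,k}\geq\xi_{i,k}^*$ from Proposition~\ref{prop:crb2} and monotonicity of $x\mapsto 2\left(1-(1+x)^{-1/2}\right)$. One small nuance: the cross term already follows from \eqref{eq:bias} alone, since $\mathbb{E}[\mathrm{Re}(\x_{i,k}^H\hat\z_{i,k})]=\mathrm{Re}(\mathbb{E}[\x_{i,k}^H\hat\z_{i,k}])=\alpha_{i,k}T_i$, so the phase constraint serves to make the real-bias model \eqref{eq:bias} consistent rather than being needed in the algebra itself---and your explicit monotonicity step with the $\geq$ direction is the right one (the paper's appendix displays a $\leq$ there, which is a sign typo relative to the lemma statement).
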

\begin{proof}
See Appendix~\ref{app:bias_variance}.
\end{proof}

\subsection{MSE asymptotics of Approximate Message Passing (AMP)-based low-rank tensor decomposition}

In \cite{kadmon18}, the Bayesian maximum a posteriori estimator of the known-rank tensor decomposition similar to \eqref{eq:cpd} is approximated using an AMP algorithm with the additional assumption that each element of the modes, i.e. the $x_{i,k}(t)$ are distributed according to a Gaussian distribution of known mean $\mu$ and variance $\sigma_0^2$. 
Even though the results of \cite{kadmon18} are derived for real-valued tensors, we conjecture that the asymptotic behavior of the AMP algorithm is similar for complex-valued tensors.
We will restrict ourselves to the case $\mu=0$ and $\sigma_0^2=1$ in the following. 
These two hypotheses slightly differ from our assumptions: in our notations they imply $\mathbb{E}[\|\x_{i,k}\|^2]=T_i$ for all $i=1\ldots p$, which is weaker than the energy constraint $\|\x_{i,k}\|^2=T_i$ in \eqref{eq:constraints}, but requires $\mathbb{E}[\|\h_k\|^2]$ to be known a priori.

In this case, let $\M_{i}$ denote the $\Ka\times\Ka$ matrix solutions of the fixed point equation \cite[eq.~(22)]{kadmon18}
\begin{equation}
\M_{i} = \left(\Delta_i\I_{\Ka}+\prod_{j\neq i}^{\bigodot}\M_j \right)^{-1}\left(\prod_{j\neq i}^{\bigodot} \M_j \right) \text{\quad $i=1,\dots,p$}
\end{equation}
where $\prod_{j\neq i}^{\bigodot}$ denotes the componentwise product operation and $\Delta_i = \sigma^2T_i^{-1}\left(\prod_{j=1}^{d+1}T_j\right)^{(1-d)/(1+d)}$.
Using \cite[eq.~(13)]{kadmon18} and \cite[eq.~(80) from the supplementary material]{kadmon18}, the MSE is asymptotically equal to
\begin{equation}
\MSE_{i,k}\xrightarrow{\{T_i\}\rightarrow\infty} 1 + \e_k^H\Big( \prod_{j\neq i}^{\bigodot} \M_j + \Delta_i\I_{\Ka}\Big)^{-1}\prod_{j\neq i}^{\bigodot}\M_j\e_k  - 2\e_k^H\M_i\e_k.
\label{eq:amp_mse}
\end{equation}


\section{LLR computation and achievable rate}
\label{sec:llr}

\subsection{Approximate equivalent model}
\label{sec:equivalent_model}
In order to derive an approximation of $p(\x_{i,k}|\{\hat\z_{j,k}\},\hat\h_k)$, we first assume that the dependency of $\x_{i,k}$ on $\{\hat\z_{j,k}\}_{j\neq i}$ is negligible while the dependency on $\hat{\h}_k$ is only expressed through its norm $\|\hat{\h}_k\|^2$ as it is an indicator of user's reliability; in other words, we assume $p(\x_{i,k}|\{\hat\z_{j,k}\},\hat\h_k) \approx p(\x_{i,k}|\hat\z_{i,k},\|\hat{\h}_k\|)$.
Then, we characterize the distribution of the output $\hat\z_{i,k}$ conditioned on $\x_{i,k}$ using only the conditional mean and variance:
\begin{equation}
\left\{\begin{array}{l}\mathbb{E}[\hat\z_{i,k}|\x_{i,k}] = \alpha_{i,k}\x_{i,k}\\
\mathbb{E}[\|\hat\z_{i,k} - \alpha_{i,k}\x_{i,k}\|^2|\x_{i,k}] = \alpha_{i,k}^2\xi_{i,k}T_i
\end{array}\right.
\end{equation}
 with the constraint $\alpha_{i,k}=\left(1+\xi_{i,k}\right)^{-1/2}$ (see Lemma~\ref{prop:mse}).
On the other hand, the phase constraint $\text{Im}(\hat\z_{i,k}^H\x_{i,k})=0$ in \eqref{eq:constraints} is not practical since it involves the transmitted vector $\x_{i,k}$. 
In practice, this means that the phase ambiguity is not resolved during the low-rank tensor approximation \eqref{eq:cpd}, and is handled instead by the single-user demapper. 
In that case, the vector constellations $\mathcal{C}_i$ should be designed to be robust to a phase ambiguity, e.g. by using a non-coherent vector modulation such as the one from \cite{ngo19}.
We will therefore model the remaining phase indeterminacy by an unknown phase $\varphi_{i,k}$, uniformly drawn on $[0,2\pi]$.
With this random phase, we choose to approximate the distribution of $\hat{\z}_{i,k}$ given $\x_{i,k}$ by a distribution realizing the maximum entropy under the moment constraints, i.e.
\begin{equation}
\hat\z_{i,k} = e^{i\varphi_{i,k}}\alpha_{i,k}\x_{i,k} + \xi_{i,k}^{1/2}\alpha_{i,k}\w_{i,k} 
\label{eq:equivalent_channel0}
\end{equation}
where $\w_{i,k}$ is an isotropic Gaussian noise vector.
Considering eq.~\eqref{eq:receive}, we define the \textit{instantaneous signal-to-noise ratio} (SNR) for user $k$ as the signal power divided by the noise power both summed over one block of $T$ channel accesses and over the receive antennas. Taking into account the constraints $\|\x_{i,k}\|^2=T_i$, this yields $\text{SNR}_k =\frac{\|\h_k\|^2}{N\sigma^2}$.
Note that we can also define the \emph{average} SNR by replacing the term $\|\h_k\|^2$ by its expectation over the fading process.
Turning now to the point-to-point model in \eqref{eq:equivalent_channel0}, we define its \textit{equivalent SNR} as
\begin{equation}
\text{SNR}^{\text{eq}}_{i,k} = \frac{\mathbb{E}[\|e^{i\varphi_{i,k}}\alpha_{i,k}\x_{i,k}\|^2]}{\mathbb{E}[\|\hat{\z}_{i,k} - e^{i\varphi_{i,k}}\alpha_{i,k}\x_{i,k}\|^2]} = \frac{1}{\xi_{i,k}}.
\label{eq:snreq}
\end{equation}
Note that the equivalent SNR does not depend on the bias but strongly depends on $\h_k$.
Using the variance bounds established in Section~\ref{section_CRB} and the approximation \eqref{eq:mse_approx}, the point-to-point equivalent SNR can be upper bounded by
\begin{equation}
\text{SNR}^{\text{eq}}_{i,k}\leq\frac{1}{\xi_{i,k}^*} \approx  \frac{TN-T_i(\Ka-1)}{(T_i-1)}\text{SNR}_k.
\label{eq:snr_eq}
\end{equation}

\subsection{LLR of decoded symbols}
We assume in the following that the $j$-th coded bit of user $k$ (denoted by $b_{j,k}$) is among the bits mapped on $\x_{i,k}$.
In order to enable soft-decision decoding, we seek to evaluate the correponding LLR, namely 
\begin{equation}
\LLR_{j,k} = \log\frac{ p(b_{j,k}=1|\hat{\z}_{i,k})}{  p(b_{j,k}=0|\hat{\z}_{i,k})}.
\end{equation}
Denoting $\mathcal{C}_i^{(a,j)}$ the set of all symbols in $\mathcal{C}_i$ having the $j$-th bit equal to $a$ and using Bayes rule and marginalizing over $\varphi_{i,k}$, we get
\begin{eqnarray}
\LLR_{j,k} \hspace{-0.2cm}&=& \log \frac{\sum_{\x_i\in\mathcal{C}_i^{(1,j)}}  \int_0^{2\pi} p(\hat{\z}_{i,k}|\x_i,\varphi_{i,k})d\varphi_{i,k}}{\sum_{\x_i\in\mathcal{C}_i^{(0,j)}}  \int_0^{2\pi} p(\hat{\z}_{i,k}|\x_i,\varphi_{i,k})d\varphi_{i,k}}\\
 &=& \log \frac{\sum_{\x_i\in\mathcal{C}_i^{(1,j)}} \mathbb{E}_{\varphi_{i,k}}\left[ \exp\left({-\frac{\|\hat\z_{i,k} - \alpha_{i,k}e^{i\varphi_{i,k}}\x_i\|^2}{\xi_{i,k}\alpha_{i,k}^2}}\right)\right]}{\sum_{\x_i\in\mathcal{C}_i^{(0,j)}}   \mathbb{E}_{\varphi_{i,k}}\left[ \exp\left({-\frac{\|\hat\z_{i,k} - \alpha_{i,k}e^{i\varphi_{i,k}}\x_i\|^2}{\xi_{i,k}\alpha_{i,k}^2}}\right)\right]}\\
  &=& \log \frac{\sum_{\x_i\in\mathcal{C}_i^{(1,j)}} I_0\left(\frac{2}{\alpha_{i,k}\xi_{i,k}}|\hat{\z}_{i,k}^H\x_{i}|\right) }{\sum_{\x_i\in\mathcal{C}_i^{(0,j)}}   I_0\left( \frac{2}{\alpha_{i,k}\xi_{i,k}}|\hat{\z}_{i,k}^H\x_{i}|\right)}
\end{eqnarray}
where we used that $\|\x_i\|^2=T_i$ for any $\x_i\in\mathcal{C}_i$ and $I_0$ denotes the zeroth-order modified Bessel function.
Since $I_0$ is approximately exponential for large values, we have
\begin{eqnarray}
\LLR_{j,k} \approx  \frac{2}{\alpha_{i,k}\xi_{i,k}}\left(\max_{\x_i\in\mathcal{C}_i^{(1,j)}}|\hat{\z}_{i,k}^H\x_i| - \max_{\x_i\in\mathcal{C}_i^{(0,j)}} |\hat{\z}_{i,k}^H\x_i|\right).
\label{eq:llr1}
\end{eqnarray}

Note that this approximation is valid when either the SNR is high or $T$ is large.
Finally, in order to allow the receiver to evaluate the LLR, we approximate $\alpha_{i,k}\xi_{i,k}$ by using \eqref{eq:alpha} and the lower bound approximation from \eqref{eq:mse_approx}.
Then, \eqref{eq:llr1} depends on the channel realization only through its norm $\|\h_k\|$.
We replace this (unknown) true channel norm by its estimate $\|\hat\h_k\|^2$, i.e.
\begin{eqnarray*}
&&\frac{1}{\alpha_{i,k}\xi_{i,k}} = \frac{1}{\xi_{i,k}}\left(1+\frac{\xi_{i,k}}{T_i}\right)^{1/2}\\
&\approx & \frac{TN-T_i(\Ka-1)}{(T_i-1)\sigma^2}\frac{\|\hat\h_k\|^2}{N} \left(1+\frac{(T_i-1)\sigma^2}{TN-T_i(\Ka-1)}\frac{N}{T_i\|\hat\h_k\|^2}\right)^{1/2}.
\end{eqnarray*}

\subsection{Achievable rate using dependence testing (DT) bound}
Let us consider the system constituted by $d$ independent parallel channels, each following model \eqref{eq:equivalent_channel0}.
Using \cite[Th. 17]{polyanskyi10}, there exists a code with $2^B$ codewords and average probability of error not exceeding 
\begin{equation}
\epsilon \leq \mathbb{E}\left[\exp\left(-\left( i(\x_{1,k},\dots,\x_{d,k};\hat\z_{1,k},\dots,\hat\z_{d,k})-\log_2\big(\frac{2^B-1}{2}\big)\right)^+\right)\right]
\label{eq:dt}
\end{equation}
where $(\cdot)^+=\max(\cdot,0)$.
On the other hand, considering inputs $\x_{i,k}$ independent and uniformly drawn in the sphere of radius $\sqrt{T_i}$ we have (see details in Appendix~\ref{app:information})
\begin{eqnarray}
i(\x_{i,k};\hat\z_{i,k}) &=& \log_2 \frac{p(\hat\z_{i,k}|\x_{i,k})}{p(\hat\z_{i,k})}\\
 &\hspace{-3cm} =& \hspace{-1.5cm}\log_2\frac{I_0\left(\frac{2}{\alpha_{i,k}\xi_{i,k}}|\hat\z_{i,k}^H\x_{i,k}|\right) \left(\frac{\sqrt{T_i}}{\alpha_{i,k}\xi_{i,k}}\|\hat\z_{i,k}\|\right)^{T_i-1}}{2\Gamma\left(T_i+1\right)I_{T_i-1}\left(\frac{2\sqrt{T_i}}{\alpha_{i,k}\xi_{i,k}}\|\hat\z_{i,k}\|\right)}
\label{eq:information_density}
\end{eqnarray}
with $I_{n}(\cdot)$ the $n$th-order modified Bessel function.
Note that, for large $T_i$, the information density is equal to a deterministic value
\begin{eqnarray}
\frac{1}{T}i(\x_{i,k};\hat\z_{i,k}) &=& (1-\frac{1}{T})\log_2(\frac{2T}{\xi_{i,k}\alpha_{i,k}})\\
 &+& 2(\frac{1}{\xi_{i,k}}-\frac{1}{\xi_{i,k}\alpha_{i,k}})\log_2(e)-\frac{\log_2(\Gamma(T+1))}{T}\nonumber.
\end{eqnarray}

\section{Simulation results}
\label{sec:simul}
We consider TBM transmission in a SIMO Rayleigh fading scenario similar to \cite{decurninge20}, with $N=50$ receive antennas, $T=3200$ channel accesses, and equal SNR across the users. Tensor dimensions of the symbols are chosen as $(T_1,T_2) = (64,50)$ resulting in $3$-D tensor with the additional spatial dimension $N=50$.

\subsection{Comparison of MSE with bounds}
We first seek to validate eq.~\eqref{eq:mse_lb}, which provides the equivalent noise level subsequently used in the equivalent single-user channel model \eqref{eq:equivalent_channel0}.
The MSE performance resulting from solving \eqref{eq:cpd} under constraints \eqref{eq:constraints} was evaluated through a Monte-Carlo simulation with $\x_{i,k}$ drawn from a uniform distribution on the sphere of radius $\sqrt{T_i}$. 
An approximate solution was computed using an inexact Gauss-Newton (GN) algorithm \cite{sorber13} with a maximum of 100 iterations.
Figs.~\ref{fig:snr_hist} and \ref{fig:snr_hist100} depict the result for $\Ka=1$ and 100 respectively, as well as the approximate lower bound \eqref{eq:mse_lb} and the theoretical asymptotic performance of AMP from \cite{kadmon18} (we consider mode 1 of user 1 w.l.o.g).
Since it is not possible to ensure that the non-convex problem \eqref{eq:cpd} is solved optimally by the GN algorithm, we resort to genie-initialized GN to try and avoid local minima, thus approaching the performance of an optimal solution to \eqref{eq:cpd}.
The genie initialization consists in initializing the gradient descent with the transmitted symbols and channels.

The histograms depicted at the top of Figure~\ref{fig:snr_hist} and Figure~\ref{fig:snr_hist100} represent the distribution of MSE values for selected SNR values (the mean of the histogram corresponds to the Y-axis position on the bottom part of the figure).

At low SNR, we observe that the bound \eqref{eq:mse_approx} is not tight.
This suggests that this bound cannot be achieved in this regime.
In order to explain this phenomenon, Figure~\ref{fig:snr_hist} details the histograms for particular low SNR values. 
We see that, in this regime, the histograms of MSE over the runs is bimodal.
Considering the curve corresponding to a genie initialization, we observe that the mode of the top in the histograms corresponds either to a global minimum of \eqref{eq:cpd} far for the transmitted symbol or to the convergence to a local minimum due to poor initialization.
Furthermore, we observe that the theoretical AMP performance is close to that of GN with genie initialization.
In particular, the phase transition of the AMP analysis around $-29$ dB is clearly visible while it is not present in the lower bound \eqref{eq:mse_approx}.
Finally, note that we observe in Figure~\ref{fig:snr_hist100} a gap between the performance of the GN with random initialization and \eqref{eq:mse_approx} due to the limited number of iterations and the high accuracy in the convergence asked in order to achieve a MSE of $-40$ dB.

\begin{figure}[h!]
\centering
\tiny
\input{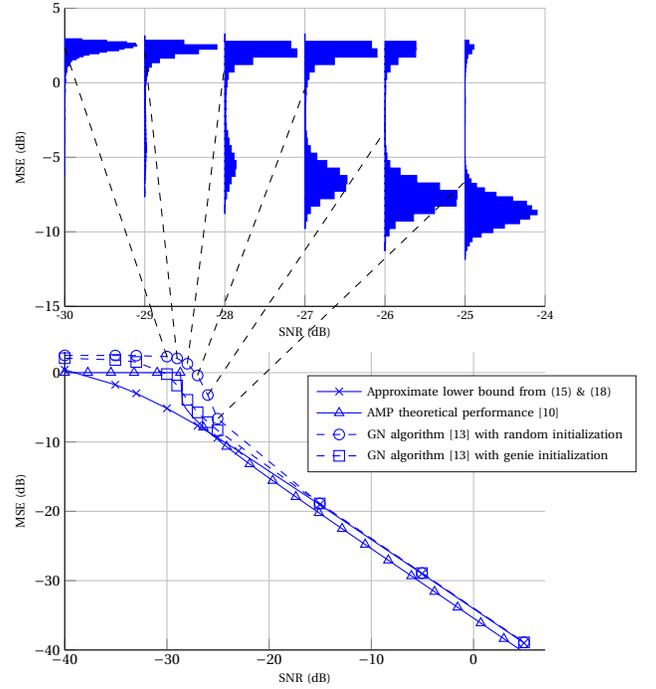}
\caption{$\MSE_{1,1}$ vs. SNR  for $\Ka=1$ user.}
\label{fig:snr_hist}
\end{figure}

\begin{figure}
\scriptsize
\centering
\tiny
\input{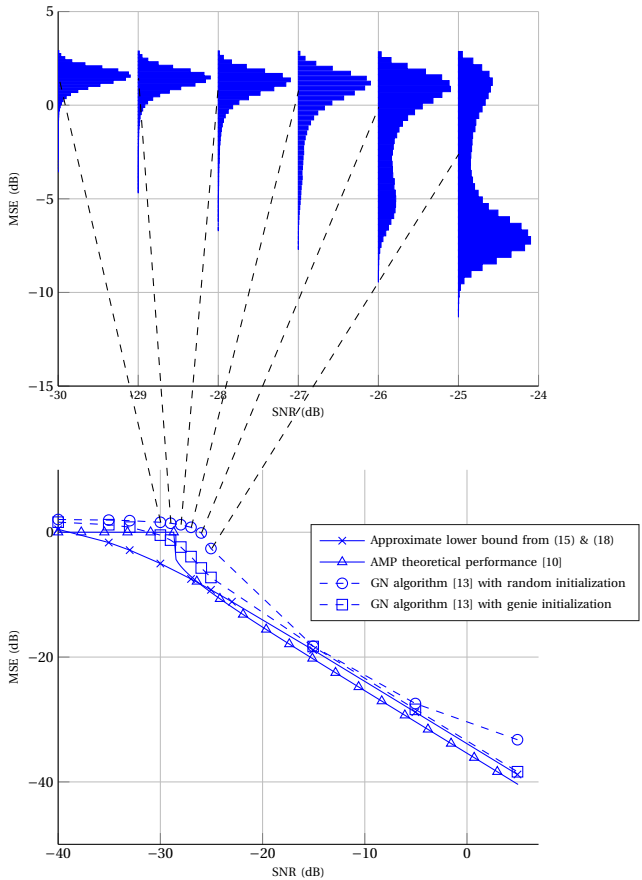}
\caption{$\MSE_{1,1}$ vs. SNR  for $\Ka=100$ users.}
\label{fig:snr_hist100}
\end{figure}

\subsection{Packet Error Rate with Channel Code}
In Figure~\ref{fig:bler_snr}, we evaluate the performance of TBM in terms of Packet Error Rate (PER) for a payload of $300$ bits encoded with either a BCH code with $318$ coded bits and hard-decision decoding (as in \cite{decurninge20}), or the rate $0.85$ polar code from the 5G standard under soft-decision decoding, using as sub-constellation $\mathcal{C}_i$ either (i) the codebook design from \cite{ngo19} or (ii) a vector constellation composed with one coordinate used as pilot and QAM symbols.
For comparison, we also depict the performance of the single-user constellation directly simulating $d$ parallel instances of model \eqref{eq:equivalent_channel0} skipping the tensor decomposition step.
Comparing the latter with the performance of TBM scheme shows \eqref{eq:equivalent_channel0} is a good approximation of the tensor decomposition output.
Finally we also represent the PER corresponding to the DT bound \eqref{eq:dt} where the expectation is estimated by generating $1000$ random realizations of $(\hat\z_{i,k},\x_{i,k})$ generated using the model \eqref{eq:equivalent_channel0}.
The gap between the DT bound and the practical constellations used in \eqref{eq:equivalent_channel0} can be explained by the sub-optimalities of the binary code and the chosen vector constellation and the randomness of the channel gains $\|\h_k\|^2$ which are not taken into account in the DT bound.
Additionally, we see in all cases the small influence of the number of active users $\Ka$ on the PER both theoretically and numerically.
Note however that the approximations used in the paper assumed $\prod_{j\neq i}T_j \gg \Ka$.
We oberve in practice that, for larger values of $\Ka$ the tensor decomposition fails and the PER is equal to 1 for any $\SNR$ value (see \cite{decurninge20}).

\begin{figure}[h!]
\scriptsize
\centering
\input{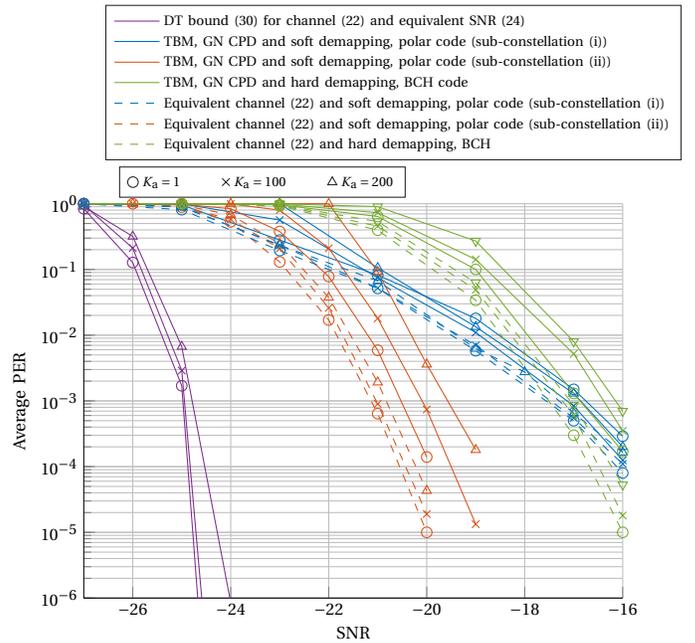}
\caption{PER vs. SNR with $300$ bits payload.}
\label{fig:bler_snr}
\end{figure}


\section{Conclusion}
We characterized the accuracy of tensor decomposition in the receiver of a TBM-based system.
We introduced an equivalent single-user channel model allowing to derive approximations of the LLR. 
The PER was evaluated for the equivalent model, and compared to the DT achievable bound.
These results were shown to be in good agreement with the performance achieved by TBM with BCH and polar codes through numerical simulations.

\bibliographystyle{IEEEtran}
\bibliography{refs}

\appendices

\section{Proof Of Proposition~\ref{prop:crb}}
\label{app:crb}
In order to derive a lower bound of the constrained estimators $\hat\btheta$, we will use the Cram\'er-Rao bounds.
Let us define $\mathcal{I}_{\btheta\btheta}$ the complex Fisher information matrix equal to $\mathcal{I}_{\btheta\btheta} = \mathbb{E}\Big[\frac{\partial \ell}{\partial \btheta^*} \Big(\frac{\partial \ell}{\partial \btheta^*}\Big)^H \Big]$ where $\btheta^*$ denotes the complex conjugate of the vector $\btheta$..

Note that since $\btheta$ is a complex random vector,  the ``full'' information matrix is the information matrix of the real vector $\btheta_R=(\text{Re}(\btheta),\text{Im}(\btheta))^T$. Moreover, since the noise $\w$ is circularly symmetric, the full information matrix is equal to (see \cite{liu_sidiropoulos_01})
\begin{equation}
\mathcal{I}_{RR} = \mathbb{E}\Big[\frac{\partial \ell}{\partial \btheta_R} \Big(\frac{\partial \ell}{\partial \btheta_R}\Big)^T \Big] = \mathcal{M}_{K\sum_i T_i}\left(\begin{array}{cc} \mathcal{I}_{\btheta\btheta} & 0\\ 0 & \mathcal{I}_{\btheta\btheta}^*\end{array}\right)\mathcal{M}_{K\sum_i T_i}^H
\end{equation}
with
\begin{equation}
\mathcal{M}_{K\sum_i T_i}=\frac{1}{\sqrt{2}}\left(\begin{array}{cc} \I_{K\sum_i T_i} & \I_{K\sum_i T_i}\\ -i\I_{K\sum_i T_i} & i\I_{K\sum_i T_i}\end{array}\right)
\end{equation}
therefore, is completely characterized by $\mathcal{I}_{\btheta\btheta}$.

In order to derive the Cram\'er-Rao bound for the constrained parameters \eqref{eq:constraints}, we note $\C(\hat\btheta_R) = (C_{1,1,1}(\hat\btheta),\dots,C_{p-1,K,2}(\hat\btheta))^T$ with the $K(p-1)$ constraint functions $C_{i,k,1}(\hat\btheta) = \|\hat\z_{i,k}\|^2-T_i$ and $C_{i,k,2}(\hat\btheta)= \text{Im}(\x_{i,k}^H\hat\z_{i,k})$ for $1\leq i\leq p-1$.
Furthermore, we define $\nu_{\btheta_R}$ an orthonormal basis of the subspace of $\mathbb{R}^{2K\sum_{i=1}^pT_i}$ orthogonal to the columns of the Jacobian of $\C(\btheta_R)$.
Let $\bdiag(\cdot)$ denote the block diagonal operator and $\D_{\alpha} = \bdiag(\alpha_{1,1}^2\I_{T_1},\alpha_{1,2}^2\I_{T_1},\dots,\alpha_{p,\Ka}^2\I_{T_p}$.
Then \eqref{eq:bias} rewrites
\[
\mathbb{E}[\hat\btheta_R] = \bdiag(\D_{\alpha},\D_{\alpha})\hat\btheta_R 
\]
and a lower bound on the constrained estimator is given by (see \cite{benhaim09})
\begin{eqnarray}
&&\mathbb{E}[(\hat\btheta_R - \mathbb{E}[\hat\btheta_R])(\hat\btheta_R - \mathbb{E}[\hat\btheta_R])^H]\nonumber \\
 &\succeq& \bdiag(\D_{\alpha},\D_{\alpha})\nu_{\btheta_R}(\nu_{\btheta_R}^H\mathcal{I}_{RR}\nu_{\btheta_R})^{-1}\nu_{\btheta_R}^H.
\label{eq:cramerrao_real}
\end{eqnarray}

Therefore, since $\left.\frac{dC_{i,k,1}}{d\hat\z_{i,k,R}}\right|_{\hat\z_{i,k}=\x_{i,k}} = \left(\begin{array}{cc}\text{Re}(\x_{i,k})\\ \text{Im}(\x_{i,k})\end{array}\right)$ and  $\left.\frac{dC_{i,k,2}}{d\hat\z_{i,k,R}}\right|_{\hat\z_{i,k}=\x_{i,k}} =  \left(\begin{array}{c} -\text{Im}(\x_{i,k}) \\ \text{Re}(\x_{i,k}) \end{array}\right)$ with the notation $\hat\z_{i,k,R}=\left(\begin{array}{cc}\text{Re}(\hat\z_{i,k})\\ \text{Im}(\hat\z_{i,k})\end{array}\right)$, we get
\begin{eqnarray}
\nu_{\btheta_R} &=& \left(\begin{array}{cc}\text{Re}(\nu_{\btheta}) & -\text{Im}(\nu_{\btheta})\\ \text{Im}(\nu_{\btheta}) & \text{Re}(\nu_{\btheta})\end{array}\right)\\
&=& \mathcal{M}_{K\sum_i T_i}\left(\begin{array}{cc}\nu_{\btheta} & 0\\ 0 & \nu_{\btheta}^*\end{array}\right)\mathcal{M}_{K(\sum_i T_i-p+1)}^H
\end{eqnarray}
with $\nu_{\btheta}=\bdiag(P_{\x_{1,1}}^{\perp},P_{\x_{2,1}}^{\perp},\dots,P_{\x_{p-1,k}}^{\perp},\I_{T_P},\dots,\I_{T_P})$ with the notation $P_{\x_{i,k}}^{\perp}\in\mathbb{C}^{T_i\times (T_i-1)}$ expressing a basis of vectors orthogonal to $\x_{i,k}$.

The Cram\'er-Rao bound can then be expressed in the complex domain as 
\begin{equation}
\mathbb{E}\Big[\big(\hat{\btheta} - \btheta\big)\big(\hat{\btheta} - \btheta\big)^T\Big]\succeq \D_{\alpha}\nu_{\btheta}(\nu_{\btheta}^H\mathcal{I}_{\btheta\btheta}\nu_{\btheta})^{-1}\nu_{\btheta}^H.
\label{eq:cr_cplx}
\end{equation}
Let us now explicit the matrix $\mathcal{I}_{\btheta\btheta}$.
Let us note $\s_{-ij,k}$ the symbol sent by user $k$ where we remove mode $i$ and $j$, i.e. for $1\leq i,j\leq p$
\begin{equation}
\s_{-ij,k} = \x_{1,k}\otimes \dots\otimes \x_{i-1,k}\otimes \x_{i+1,k}\otimes\dots\otimes \x_{j-1,k}\otimes \x_{j+1,k}\otimes\dots\otimes\x_{p,k}
\end{equation}
and
\begin{equation}
\left\{\begin{array}{lll}
\S_{-ij} &=& (\s_{-ij,1},\dots,\s_{-ij,K})\in\mathbb{C}^{\prod_{m\neq i,j}T_m\times K},\\
\bGamma_{ij} &=& [\gamma_{ij,kk'}]_{k,k'=1}^K= \S_{-ij}^H\S_{-ij},\\
\G &=& \bdiag(\bGamma_{ii}\otimes \I_{T_i})_{i=1}^p\in\mathbb{C}^{K\sum_iT_i\times K\sum_iT_i},\\
 Z &=& \bdiag(\I_K\otimes \X_i)_{i=1}^p\in\mathbb{C}^{K\sum_iT_i\times K^2p}.
\end{array}\right.
\end{equation}
By noting $\P\in\mathbb{C}^{K^2\times K^2}$ the permutation matrix such that $\vec(\M) = \P\vec(\M^T)$ for any matrix $\M$, we define $\K$ by
\begin{equation}
\K = [\K_{ij}]_{i,j=1}^p \text{\quad with \quad} \K_{ij} = (1-\delta_{ij})\P\diag(\vec(\bGamma_{ij})) 
\end{equation}
with $\delta_{ij}=\left\{\begin{array}{cc}1\text{ if $i=j$}\\0\text{ if $i\neq j$}\end{array}\right.$.
With these notations, we get from \cite{tichavsky13}
\begin{equation}
\mathcal{I}_{\btheta\btheta} = \frac{1}{\sigma^2}(\G + \Z\K\Z^H).
\end{equation}
Then, noting $\G_{\nu} = \nu_{\btheta}^H\G\nu_{\btheta}$ and $\Z_{\nu} =  \nu_{\btheta}^H\Z$. and using Woodbury identity, we get 
\begin{eqnarray}
&&(\nu_{\btheta}^H\mathcal{I}_{\btheta\btheta}\nu_{\btheta})^{-1} \\
&=&\sigma^2\G_{\nu}^{-1} - \sigma^2\G_{\nu}^{-1}\Z_{\nu}\K(\I+\Z_{\nu}^H\G_{\nu}^{-1}\Z_{\nu}\K)^{-1}\Z_{\nu}^H\G_{\nu}^{-1}\\
&=&\sigma^2\G_{\nu}^{-1/2}(\I - \M + \M(\I+\M)^{-1}\M)\G_{\nu}^{-1/2}.
\end{eqnarray}
where we noted $\M = \G_{\nu}^{-1/2}\Z_{\nu}\K\Z_{\nu}^H\G_{\nu}^{-1/2}$.
Since $\nu_{\btheta}^H\mathcal{I}_{\btheta\btheta}\nu_{\btheta} = (\G_{\nu} + \Z_{\nu}\K\Z_{\nu}^H)\succ 0$, it holds that $(\I+\M)^{-1}$ is a positive matrix, we get a lower bound on the inverse information matrix, i.e.
\begin{equation}
(\nu_{\btheta}^H\mathcal{I}_{\btheta\btheta}\nu_{\btheta})^{-1} \succeq 
\sigma^2\G_{\nu}^{-1} - \sigma^2\G_{\nu}^{-1}\Z_{\nu}\K\Z_{\nu}^H\G_{\nu}^{-1}.
\label{eq:fisher_lowerbound}
\end{equation}
Since $\G_{\nu}$ and $\Z_{\nu}$ are block-diagonal matrices with $p$ blocks and the block diagonal elements of $\K$ are equal to 0, it results that the diagonal elements of $\G_{\nu}^{-1}\Z_{\nu}\K\Z_{\nu}^H\G_{\nu}^{-1}$ are equal to 0. 

Note $\E_{i,k}\in\mathbb{C}^{K\sum_jT_j \times T_i}$ the projector on the $i$-th mode of the $k$-user and $\E_{k}\in\mathbb{C}^{KT_i\times T_i}$ the projector on user $k$.
Since the matrices $\G$ and $\nu_{\btheta}$ are block-diagonal with respect to the index $1\leq i\leq p$, we can only consider the i-th block of the matrices $\G$ and $\nu_{\btheta}$ that we denote by $\G_i$ and $\nu_i$ respectively.
Then, for any $1\leq i<p$, \eqref{eq:fisher_lowerbound} yields
\begin{equation}
\E_{i,k}^H\nu_{\btheta}(\nu_{\btheta}^H\mathcal{I}_{\btheta\btheta}\nu_{\btheta})^{-1}\nu_{\btheta}^H\E_{i,k} \succeq \sigma^2 \E_{k}^H\nu_{i}(\nu_i^H\G_i\nu_i)^{-1}\nu_{i}^H\E_{k}.
\label{eq:cr_inequality}
\end{equation}
On the other hand, if we complete the incomplete basis $\nu_{i}$ with a collection of orthogonal vectors stacked in $\mu_{i}$, we get
\begin{equation}
(\nu_{i}^H\G_i\nu_{i})^{-1} = \nu_{i}^H\G_i^{-1}\nu_{i} - \nu_{i}^H\G_i^{-1} \mu_{i}(\mu_{i}^H\G_i^{-1}\mu_{i})^{-1}\mu_{i}^H\G_i^{-1}\nu_{i}.
\label{eq:decomposition}
\end{equation}
Let us denote an orthonormal basis of the subspace of $\mathbb{C}^{T_iK}$ spanned by the columns of $\G_i^{-1/2}\mu_{i}$ by $\overline\U_i$.
On the other hand, since $\mu_i = \bdiag(\frac{\x_{i,1}}{\|\x_{i,1}\|},\dots,\frac{\x_{i,K}}{\|\x_{i,K}\|})$, we obtain that $\G_i^{-1/2}\mu_{i}=(\bGamma_{ii}^{-1/2}\e_{k'}\otimes \frac{\x_{i,k'}}{\|\x_{i,k'}\|})_{1\leq k'\leq K}$.
Then, it holds
\begin{equation}
\G_i^{-1/2} \mu_{i}(\mu_{i}^H\G_i^{-1}\mu_{i})^{-1}\mu_{i}^H\G_i^{-1/2} = \overline\U_i\overline\U_i^H.
\label{eq:Ui}
\end{equation}
Moreover, if we note $\blambda_{i,k} = \bGamma_{i,i}^{-1/2}\e_k$ and $\V_i = \frac{\blambda_{i,k}}{\|\blambda_{i,k}\|}\otimes P^{\perp}_{\x_{i,k}}$ , we get
\begin{equation}
\G_i^{-1/2}\nu_{i}\nu_{i}^H\E_{k} = \blambda_{i,k}\otimes P^{\perp}_{\x_{i,k}}(P^{\perp}_{\x_{i,k}})^H
\end{equation}
and
\begin{eqnarray}
&&\G_i^{-1/2}\nu_{i}\nu_{i}^H\E_{k}\E_{k}^H\nu_{i}\nu_{i}^H\G_i^{-1/2}\nonumber\\
 &=&  \blambda_{i,k}\blambda_{i,k}^H\otimes P^{\perp}_{\x_{i,k}}(P^{\perp}_{\x_{i,k}})^H\nonumber\\
 &=& \| \blambda_{i,k}\|^2\V_i\V_i^H = \e_k^H\bGamma_{i,i}^{-1}\e_k\V_i\V_i^H\nonumber\\
 &=&  (\bGamma_{i,i}^{-1})_{k,k}\V_i\V_i^H\label{eq:Vi}.
\end{eqnarray}

Hence, combining \eqref{eq:cr_cplx}, \eqref{eq:cr_inequality}, \eqref{eq:Ui} and \eqref{eq:Vi} the variance of any estimator satisfying \eqref{eq:bias} is lower bounded by 
\begin{eqnarray}
&&\frac{1}{\sigma^2\alpha_{i,k}^2}\mathbb{E}\big[\|\hat{\z}_{i,k} -\alpha_{i,k} \x_{i,k}\|^2\big]\\
&\geq&\text{Tr}\Big[\E_{i,k}^H\nu_{\btheta}(\nu_{\btheta}^H\mathcal{I}_{\btheta\btheta}\nu_{\btheta})^{-1}\nu_{\btheta}^H\E_{i,k}\Big]\label{eq:cr00}\\
&\geq& \text{Tr}\Big[P^{\perp}_{\x_{i,k}}(P^{\perp}_{\x_{i,k}})^H (\bGamma_{i,i}^{-1})_{k,k} -  (\bGamma_{i,i}^{-1})_{k,k}\V_i\V_i^H \overline\U_i\overline\U_i^H\Big]\\
&=& \frac{T_i-1-\|\V_i^H \overline\U_i\|^2}{\|P^{\perp}_{\S_{-ii,-k}}\s_{-ii,k}\|^2}
\end{eqnarray}
By remarking that the vector $\bGamma_{ii}^{-1/2}\e_k\otimes \x_{i,k}$ is orthogonal to $\V_i$, we can remove its corresponding subspace in $\overline\U_i$ without changing the result.

Finally, in the case $K=1$, we get $\|\overline\U_i^H\V_i\|^2=0$ and $\bGamma_{ii} = \prod_{j\neq i}\|\x_{i,1}\|^2$ leading to the result in \eqref{eq:cr_variance1}.

\section{Proof Of Proposition~\ref{prop:crb2}}
\label{app:crb2}
Note $\D_i=\diag((\bGamma_{ii}^{-1})_{11},\dots,(\bGamma_{ii}^{-1})_{KK})$.
Since $\bGamma_{ii}$ is invertible, we get that $(\bGamma_{ii}^{-1})_{kk} = \e_k^H\bGamma_{ii}^{-1}\e_k>0$ for any $1\leq k\leq K$, i.e. there exists $\eta_i^-,\eta_i^+$ such that $\eta_i^- \D_i \preceq \bGamma_{ii}^{-1}\preceq \eta_i^+ \D_i$.
Combining this property with \eqref{eq:cr_inequality} and \eqref{eq:decomposition}, we will now derive an alternative lower bound on the estimator variance $\mathbb{E}\big[\|\hat{\z}_{i,k} - \alpha_{i,k}\x_{i,k}\|^2\big]$.

Note first that it holds that $\eta_i^-\D_i\otimes\I_{T_i} \preceq \G_i^{-1} \preceq \eta_i^+\D_i\otimes\I_{T_i}$ and in particular 
\[
(\mu_i^H\G_i^{-1}\mu_i)^{-1} \preceq \frac{1}{\eta_i^-}\D_i^{-1}.
\]
Then, considering the $(k,k)$-th block of $\nu_{i}^H\G_i^{-1} \mu_{i}(\mu_{i}^H\G_i^{-1}\mu_{i})^{-1}\mu_{i}^H\G_i^{-1}\nu_{i}$ and noting $\tilde\x_{i,k} = \frac{\x_{i,k}}{\|\x_{i,k}\|}$, it holds
\begin{eqnarray}
&&\E_k^H\nu_{i}^H\G_i^{-1} \mu_{i}(\mu_{i}^H\G_i^{-1}\mu_{i})^{-1}\mu_{i}^H\G_i^{-1}\nu_{i}\E_k\\
 &\preceq& \frac{1}{\eta_i^-}(P^{\perp}_{\x_{i,k}})^H\left(\sum_{k'=1}^K \big|(\bGamma_{ii}^{-1})_{kk'}\big|^2 \frac{1}{(\bGamma_{ii}^{-1})_{k'k'}}\tilde\x_{i,k'}\tilde\x_{i,k'}^H\right)P^{\perp}_{\x_{i,k}}.
\end{eqnarray}
Considering the trace and using the fact that $P^{\perp}_{\x_{i,k}}\tilde\x_{i,k}=0$ and $\|P^{\perp}_{\x_{i,k}}\tilde\x_{i,k'}\|^2\leq 1$ for any $k\neq k'$, we get
\begin{eqnarray}
&&\text{Tr}\big[\E_k^H\nu_{i}^H\G_i^{-1} \mu_{i}(\mu_{i}^H\G_i^{-1}\mu_{i})^{-1}\mu_{i}^H\G_i^{-1}\nu_{i}\E_k\big]\\
&\leq&  \frac{1}{\eta_i^-}\sum_{k'\neq k} \big|(\bGamma_{ii}^{-1})_{kk'}\big|^2 \frac{1}{(\bGamma_{ii}^{-1})_{k'k'}}\big\|(P^{\perp}_{\x_{i,k}})^H\tilde\x_{k'}\big\|^2\\
&\leq&  \frac{1}{\eta_i^-}\sum_{k'\neq k} \big|(\bGamma_{ii}^{-1})_{kk'}\big|^2 \frac{1}{(\bGamma_{ii}^{-1})_{k'k'}}\\
&\leq& \frac{1}{\eta_i^-}\left( (\bGamma_{ii}^{-1}\D_i^{-1}\bGamma_{ii}^{-1})_{kk} - \frac{\big|(\bGamma_{ii}^{-1})_{kk}\big|^2}{(\D_i)_{kk}}\right)\label{eq:upper}.
\end{eqnarray}
Since $\D_i^{-1/2}\bGamma_{ii}^{-1}\D_i^{-1/2} \preceq \eta_i^+ \I_K$, we get $\bGamma_{ii}^{-1}\D_i^{-1}\bGamma_{ii}^{-1}\preceq (\eta_i^+)^2\D_i$.
Moreover, it holds $\frac{\big|(\bGamma_{ii}^{-1})_{kk}\big|^2}{(\D_i)_{kk}}=(\bGamma_{ii}^{-1})_{kk}$.

Therefore, combining \eqref{eq:decomposition} and \eqref{eq:upper}
\begin{eqnarray}
&&\frac{1}{\sigma^2\alpha_{i,k}^2}\mathbb{E}\big[\|\hat{\z}_{i,k} - \alpha_{i,k}\x_{i,k}\|^2\big]\\
 &\leq& (T_i-1)(\bGamma_{ii}^{-1})_{kk} - \frac{(\bGamma_{ii}^{-1})_{kk}}{\eta_i^-}((\eta_i^+)^2-1)\\
 &=& \frac{T_i-1}{\|P^{\perp}_{\S_{-ii,-k}}\s_{-ii,k}\|^2}\left(1 - \frac{(\eta_i^+)^2-1}{(T_i-1)\eta_i^-}\right).
\end{eqnarray}

\section{Proof of Lemma~\ref{prop:mse}}
\label{app:bias_variance}
First note that in the class of estimators satisfying the constraints \eqref{eq:constraints}, it holds that $\mathbb{E}[\|\hat\z_{i,k}\|^2]=\|\x_{i,k}\|^2=T_i$, hence, using \eqref{eq:bias}, the bias and the variance of $\hat\z_{i,k}$ are related for any $1\leq i\leq p-1$ through
\begin{eqnarray}
&&\mathbb{E}[\|\hat\z_{i,k} - \alpha_{i,k}\x_{i,k}\|^2]\nonumber\\
 &=& \mathbb{E}[\|\hat\z_{i,k}\|^2] + \alpha_{i,k}^2\|\x_{i,k}\|^2 - 2\alpha_{i,k}\mathbb{E}[\text{Re}(\hat\z_{i,k}^H\x_{i,k})]\nonumber\\ 
 &=& T_i(1-\alpha_{i,k}^2).
\label{eq:bias_var}
\end{eqnarray}

Combining \eqref{eq:bias_var} and the definition of $\xi_{i,k}$ in \eqref{eq:xidef} yields that $\alpha_{i,k}$ and $\xi_{i,k}$ are linked through 
\begin{equation}
\alpha_{i,k}^2 = \frac{1}{1+ \xi_{i,k}}.
\label{eq:alpha}
\end{equation}
Repeating the operation in \eqref{eq:bias_var} for the MSE yields
\begin{eqnarray}
\text{MSE}_{i,k} &=& \frac{1}{T_i}\left( \mathbb{E}[\|\hat\z_{i,k}\|^2] +\|\x_{i,k}\|^2 - 2\mathbb{E}[\text{Re}(\hat\z_{i,k}^H\x_{i,k})]\right)\\
&=& 2(1-\alpha_{i,k})=  2\left(1-\frac{1}{(1+\xi_{i,k})^{1/2}}\right). 
\label{eq:cr_mse0}
\end{eqnarray}

We can then lower bound the MSE using the definition of $\xi^*_{i,k}$ 
\begin{eqnarray}
\label{eq:mse_norm_bound}
&&\frac{1}{T_i}\mathbb{E}\big[\|\hat{\z}_{i,k} - \x_{i,k}\|^2\big] = 2\left(1 - \frac{1}{(1+\xi_{i,k})^{1/2}}\right)\\
 &\leq& 2\left(1 - \frac{1}{(1+\xi_{i,k}^*)^{1/2}}\right)
 \end{eqnarray}

\section{Computation of the the information density for model \eqref{eq:equivalent_channel0}}
\label{app:information}
For the sake of compactness of notations, we will drop the indices $i$ and $k$ and consider the model \eqref{eq:equivalent_channel0} with input $\x=\sqrt{T}\u$ with $\u$ uniformly drawn on the sphere of radius $1$ and the output $\hat\z$ so that
\begin{equation}
\hat\z = \alpha(e^{i\varphi}\sqrt{T}\u+\xi^{1/2}\w)
\end{equation}
with $\varphi$ uniformly drawn in $[0;2\pi]$ and $\w$ standard Gaussian random vector.

In the following , we will note $\beta=\frac{1}{\alpha\xi}$ and $S_{\mathbb{C},T}$ and $S_{\mathbb{R},T}$ the complex and real unit sphere respectively.

Remarking that $\u$ and $e^{i\varphi}\u$ have the same distribution and using the conditioning over $\varphi$ and $\u$, we get
\begin{eqnarray}
p(\hat\z|\x)
&=& \frac{1}{2\pi}\int_{0}^{2\pi} \exp(-\frac{1}{\xi\alpha^2}\|\hat\z-\alpha e^{i\varphi}\x\|^2)d\varphi \label{eq:line2}\\
&=& \frac{1}{2\pi}\int_{0}^{2\pi} \exp\left\{2\beta \text{Re}(\hat\z^H\x e^{i\varphi})\right\} d\varphi \\
&=& \frac{1}{2\pi}\int_{S_{\mathbb{R},2}} \exp\left\{2\beta |\hat\z^H\x|\u^T\e_1\right\} d\u\label{eq:line4}.
\end{eqnarray}
where we used in \eqref{eq:line2} that $\|\x\|^2=T$ is a constant and, in \eqref{eq:line4}, we noted $\e_1=(1 0)^T$ and used that $\text{Re}(\hat\z^H\x e^{i\varphi})$ and $|\hat\z^H\x|\text{Re}( e^{i\varphi})$ have the same distribution.
On the other hand,
\begin{eqnarray}
p(\hat\z)&=& \frac{\Gamma(T+1)}{\pi^T}\int_{S_{\mathbb{C},T}} \exp(-\frac{1}{\xi\alpha^2}\|\hat\z-\alpha\sqrt{T}\u\|^2)d\u \label{eq:line2b}\\
&=& \frac{\Gamma(T+1)}{\pi^T}\int_{S_{\mathbb{C},T}} \exp\left\{2\beta\sqrt{T}\text{Re}(\hat\z^H\u)\right\}d\u \\
&=& \frac{\Gamma(T+1)}{\pi^T}\int_{S_{\mathbb{R},2T}} \exp\left\{2\beta\sqrt{T}(\hat{\tilde\z}^T\tilde\u)\right\}d\tilde\u \label{eq:line4b}.
\end{eqnarray}
where we used in \eqref{eq:line2b} that the volume of the complex $T$-dimensional unit sphere is equal to $\frac{\pi^T}{\Gamma(T+1)}$ and, in \eqref{eq:line4}, we noted $\tilde\u = (\text{Re}(\u^T),\text{Im}(\u^T))^T$.
The information density then writes 
\begin{eqnarray}
i(\x;\hat\z) &=& \log_2\frac{p(\hat\z|\x)}{p(\hat\z)}\\
&=& \log_2\frac{\frac{1}{2\pi}\int_{S_{\mathbb{R},2}} \exp\left\{2\beta |\hat\z^H\x|\u^T\e_1\right\} d\u}{\frac{\Gamma(T+1)}{\pi^T}\int_{S_{\mathbb{R},2T}} \exp\left\{2\beta\sqrt{T}(\hat{\tilde\z}^T\tilde\u)\right\}d\tilde\u }.
\end{eqnarray}
Now, we use that for any $n\geq 1$, $\kappa>0$ and $\x\in S_{\mathbb{R},n}$, (see e.g. \cite{watson84})
\begin{equation}
\int_{S_{\mathbb{R},n}} \exp\{\kappa\u^T\x\}d\u = \frac{(2\pi)^{n/2}I_{n/2-1}(\kappa)}{\kappa^{n/2-1}}.
\end{equation}
Then, we get
\begin{equation}
i(\x;\hat\z) = \log_2\frac{I_0(2\beta |\hat\z^H\x|)}{\frac{\Gamma(T+1)}{\pi^T}\frac{(2\pi)^{T}I_{T-1}(2\beta\sqrt{T}\|\hat\z\|)}{(2\beta\sqrt{T}\|\hat\z\|)^{T-1}} }\label{eq:line4}.
\end{equation}
Now, note that 
\[
\|\hat\z\|^2 = \alpha^2(T+\xi\|\w\|^2+2(T\xi)^{1/2}\text{Re}(e^{i\varphi}\u^H\w))
\]
and 
\begin{eqnarray*}
|\hat\z^H\x|^2 &=& \alpha^2\big|Te^{i\varphi}+(T\xi)^{1/2}\u^H\w\big|^2\\
 &=&  \alpha^2((T+(T\xi)^{1/2}\text{Re}(e^{-i\varphi}\u^H\w))^2+\text{Im}(e^{-i\varphi}\u^H\w)^2)
\end{eqnarray*}
First remark that $\u$ and $e^{i\varphi}\u$ have the same distribution, and that $\|\w\|^2$ and $\frac{\w}{\|\w\|}$ are independent and respectively drawn from a  chi-square random variable with $2T$ degrees of freedom and a uniform distribution on the complex unit sphere.
We can then define four independent scalar random variables: $Q$ a chi-square random variable with $2T$ degrees of freedom and $Z_1,Z_2$ two independent random variables drawn from $\text{Beta}(1/2,(T-1)/2)$ and $S$ a Rademacher random variable, we have that 
\begin{equation}
\begin{array}{lll}
\|\hat\z\|^2 &\stackrel{\mathcal{L}}{=}& \alpha^2(T+\xi Q/2+2(T\xi QZ_1/2)^{1/2}S),\\
|\hat\z^H\x|^2 &\stackrel{\mathcal{L}}{=}&  \alpha^2((T+(T\xi QZ_1/2)^{1/2}S)^2+Z_2Q/2)
\end{array}
\end{equation}
where $\stackrel{\mathcal{L}}{=}$ represents the equality in distribution.
When $T$ goes to infinity, the law of large numbers states that $\frac{1}{T}Q\xrightarrow{a.s.}2$ and $T^{1+\epsilon} Z_1\xrightarrow{a.s.}0$ for any $\epsilon>0$, then
\begin{equation}
\begin{array}{lll}
\frac{1}{T}\|\hat\z\|^2 &\xrightarrow{a.s.}& 1,\\
\frac{1}{T}|\hat\z^H\x| &\xrightarrow{a.s.}&  \alpha .
\end{array}
\end{equation}
In that case, the information density satisfies
\begin{eqnarray}
&&(1-\frac{1}{T})\log_2(2\beta T) + 2\beta(\alpha-1)\log_2(e)-\frac{\log_2(\Gamma(T+1))}{T}\nonumber\\
&&-\frac{1}{T}i(\x;\hat\z)\xrightarrow{T\rightarrow\infty} 0
\end{eqnarray}
where we used $\frac{1}{T}\log_2 I_0(T) \xrightarrow{T\rightarrow\infty} \log_2( e)$.

\end{document}